\newcommand{\ignore}[1]{}
\newtheorem{theorem}{Theorem}
\newtheorem{lemma}[theorem]{Lemma}
\newtheorem{claim}[theorem]{Claim}
\newtheorem{corollary}[theorem]{Corollary}
\theoremstyle{definition}
\newcommand{\EX}{\hbox{\bf E}}
\newcommand{\pr}{{\rm Pr}}
\newcommand{\var}{{\rm var}}
\def\eps{\varepsilon}
\def\bar{\overline}
\def\ceil#1{\lceil {#1} \rceil}
\newcommand{\cE}{{\cal E}}
\newcommand{\cF}{{\cal F}}
\newcommand{\bone}{\mathbb{1}}
\newcommand{\Sec}[1]{\hyperref[sec:#1]{\S\ref*{sec:#1}}} 
\newcommand{\Eqn}[1]{\hyperref[eq:#1]{(\ref*{eq:#1})}} 
\newcommand{\Fig}[1]{\hyperref[fig:#1]{Fig.\,\ref*{fig:#1}}} 
\newcommand{\Tab}[1]{\hyperref[tab:#1]{Tab.\,\ref*{tab:#1}}} 
\newcommand{\Thm}[1]{\hyperref[thm:#1]{Theorem\,\ref*{thm:#1}}} 
\newcommand{\Lem}[1]{\hyperref[lem:#1]{Lemma\,\ref*{lem:#1}}} 
\newcommand{\Prop}[1]{\hyperref[prop:#1]{Prop.~\ref*{prop:#1}}} 
\newcommand{\Cor}[1]{\hyperref[cor:#1]{Corollary~\ref*{cor:#1}}} 
\newcommand{\Def}[1]{\hyperref[def:#1]{Definition~\ref*{def:#1}}} 
\newcommand{\Alg}[1]{\hyperref[alg:#1]{Alg.~\ref*{alg:#1}}} 
\newcommand{\Ex}[1]{\hyperref[ex:#1]{Ex.~\ref*{ex:#1}}} 
\newcommand{\Clm}[1]{\hyperref[clm:#1]{Claim~\ref*{clm:#1}}} 
\def\poly{{\sf poly}}
\newcommand{\davg}{\overline{d}}
\newcommand{\hatd}{\widehat{d}}
\newcommand{\const}{10}
\newcommand{\adderr}{\Delta}
\newcommand{\mulerr}{\eps}
\newcommand{\wt}{\hbox{wt}}
\newcommand{\heav}{{\tt heavy}}
\newcommand{\triest}{{\tt estimate}}
\newcommand{\twoapp}{{\tt 2-approx}}
\author{C. Seshadhri
 \\ {\tt scomandu@ucsc.edu}\\
University of California, Santa Cruz
}
\title{A simpler sublinear algorithm for approximating the triangle count}
\date{}
\begin{document}

\maketitle

\begin{abstract} A recent result of Eden, Levi, and Ron (ECCC 2015) provides a sublinear time
algorithm to estimate the number of triangles in a graph. Given an undirected graph $G$,
one can query the degree of a vertex, the existence of an edge between vertices,
and the $i$th neighbor of a vertex. Suppose the graph has $n$ vertices, $m$ edges, and $t$
triangles. In this model, Eden et al provided a $O(\poly(\eps^{-1}\log n)(n/t^{1/3} + m^{3/2}/t))$
time algorithm to get a $(1+\eps)$-multiplicative approximation for $t$, the triangle count.
This paper provides a simpler algorithm with the same running time (up to differences
in the $\poly(\eps^{-1}\log n)$ factor) that has a substantially simpler analysis.
\end{abstract}

\section{Introduction}

Counting the number of triangles in a graph is a fundamental algorithmic problem. It has been
studied widely in theory and well as practice. With the recent study of complex networks
and massive real-world graphs, triangle counting is a key operation in graph
analysis for bioinformatics, social networks, community analysis, and graph modeling~\cite{HoLe70,Co88,EcMo02,Po98,Milo2002,Burt04,BeBoCaGi08,FoDeCo10,BerryHLP11,SeKoPi11}.

There has been much study on triangle counting by the theoretical computer science community, where the primary hammer is 
fast matrix multiplication~\cite{ItRo78,AlYuZw97,BjPa+14}. On the more applied side, there is a plethora
of provable and practical algorithms that employ clever sampling methods for
approximate triangle counting~\cite{ChNi85,ScWa05,ScWa05-2,Ts08,TsKaMiFa09,KoMiPeTs10,Av10,ChCh11,SuVa11,TsKoMi11,ArKhMa12,SePiKo13,TaPaTi13}. Triangle counting has also been a popular problem
for the streaming setting~\cite{BaKuSi02,JoGh05,BuFrLeMaSo06,AhGuMc12,KaMeSaSu12,JhSePi13,PaTaTi+13,TaPaTi13,AhDuNe+14}.

Yet all these algorithms always read the entire graph. 
A recent result of Eden, Levi, and Ron~\cite{EdLe+15} provided the first truly \emph{sublinear} algorithm.
For a graph with $n$ vertices, $m$ edges, and $t$ triangles, their
running time is $O(\poly(\eps^{-1}\log n)(n/t^{1/3} + m^{3/2}/t))$.
Observe that when $t = \Omega(m^{1/2+\delta})$ (for constant $\delta > 0$), this algorithm does not even read the entire graph.
They also prove a lower bound, tight up to the $\poly(\eps^{-1}\log n)$ factor.
The algorithm and analysis are quite intricate, and require a number of ideas.
This paper provides a simpler algorithm with a simpler analysis.

Let us explain the formal result. The input is an undirected graph $G = (V,E)$, stored as an adjacency list.
We set $V = [n]$, where $n$ is known to the algorithm. The following queries are allowed:

\medskip
\begin{asparaitem}
    \item Degree queries: for any $v \in V$, we can get $d_v$, the degree of $v$.
    \item Neighbor queries: for any $v \in V$ and $i \leq d_v$, we can get the $i$th neighbor of $v$.
    \item Edge queries: for any $u,v \in V$, we can check if $(u,v) \in E$.
\end{asparaitem}
\medskip

We denote the number of edges by $m$ and the number of triangles by $t$. Our main result
is the following (identical to that of~\cite{EdLe+15}).

\begin{theorem} \label{thm:main-result} There exists an algorithm with the following guarantee.
Given query access to graph $G$ and an approximation parameter $\eps > 0$, the algorithm
outputs an estimate for $t$ in the range $[(1-\eps)t,(1+\eps)t]$ with probability at least $>2/3$.  
The expected running time of the algorithm is $O(\poly(\eps^{-1}\log n)(n/t^{1/3} + m^{3/2}/t))$.
\end{theorem}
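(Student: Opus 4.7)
I would begin by orienting each edge of $G$ from its lower-degree endpoint to its higher-degree endpoint, breaking ties by vertex ID. Under this orientation each triangle $\{x,y,z\}$ acquires a unique \emph{apex}: the vertex whose two triangle-edges both point outward (equivalently, the vertex of smallest degree in the triangle, with ties broken by ID). Write $T(v)$ for the number of triangles with apex $v$; then $t = \sum_v T(v)$. The classical inequality $d^+(v) \le \sqrt{2m}$, which follows from $d(v)\cdot d^+(v) \le 2m$ (the $d^+(v)$ out-neighbors of $v$ all have degree at least $d(v)$), gives $T(v) \le \binom{d^+(v)}{2}$, and this uniform out-degree bound is what drives the variance calculation.

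The estimator I would use is a two-level sampling scheme. First draw a multiset $S$ of $s$ vertices uniformly from $V$; then, for each $v \in S$, sample $k$ pairs of neighbors of $v$, use edge queries to check which form an edge, keep only those pairs $(u,w)$ satisfying the apex conditions ($d(u), d(w) \ge d(v)$ with the tie-break), and rescale by $\binom{d(v)}{2}/k$ to form an unbiased estimate of $T(v)$. Summing with outer scaling $n/s$ gives an unbiased estimator $\hat t$ of $t$, and the work per outer sample is $\tilde O(k)$ edge and neighbor queries. The outer sample size $s$ must dominate $n\cdot \max_v T(v)/(\eps^2 t)$ for a Chebyshev-style concentration, while the inner $k$ must keep the per-vertex relative error small.

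The $n/t^{1/3}$ term will come from bounding $\max_v T(v) \le \tilde O(t^{2/3})$ after handling super-heavy apexes separately: if $T(v) > t^{2/3}$, then $v$ carries at least a $t^{-1/3}$ fraction of all triangles, so there are fewer than $t^{1/3}$ such $v$, and their necessarily large out-degrees mean each can be estimated with a dedicated inner estimator at modest cost. The $m^{3/2}/t$ term will come from the inner sampling: using $d^+(v) \le \sqrt{2m}$ together with an averaging / Cauchy--Schwarz bound across the outer samples, the total number of edge queries from the pair checks can be shown to be $\tilde O(m^{3/2}/t)$ in expectation.

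The main obstacle is the variance contribution of triangle-heavy apex vertices: a single $v$ with anomalously large $T(v)$ can wreck a naive second-moment bound because $\sum_v T(v)^2$ is driven by the heaviest term. My plan to handle this is a degree-threshold bucketing, where vertices whose degree exceeds a carefully chosen threshold get a dedicated estimator with more inner samples (exploiting the fact that their large $d^+(v)$ forces dense apex-neighborhoods and thus high inner ``hit rates''), while lighter vertices admit a direct Chebyshev bound via $T(v) \le \binom{d^+(v)}{2}$. Balancing this threshold against $t$ and $m$ is where the bulk of the care lies, and it is precisely this balance that produces the two additive terms $n/t^{1/3} + m^{3/2}/t$ in the stated runtime.
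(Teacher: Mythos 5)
Your plan is not a re-description of the paper's algorithm: the paper does not orient edges or use an apex decomposition. Its inner estimator samples a uniform random edge $e$ incident to $v$, takes the lower-degree endpoint $u$ of $e$, and then samples $\lceil d_u/\sqrt m\rceil$ random neighbors of $u$, awarding value $d_u$ per hit; the crucial point is that the scaling factor is $d_u = \min(d_u,d_v)$, a \emph{single} degree, whose expectation over a random edge is $O(\sqrt m)$ by the Chiba--Nishizeki bound $\sum_e \min(d_u,d_v)=O(m^{3/2})$. That is exactly what makes the per-sample variance $\le \sqrt m\cdot\EX[\cdot]$ and the number of inner samples $O^*(m^{3/2}/t)$.

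Your pair-sampling estimator at the apex carries a scaling factor of $\binom{d(v)}{2} = \Theta(d(v)^2)$, not $\min$-degree. The per-vertex variance of the inner estimate is then $\Theta\!\left(\binom{d(v)}{2}\,T(v)/k\right)$, and if you try to balance this against the cost $\sum_v k_v$ you run into $\sum_v d(v)^2 T(v)$ (after Cauchy--Schwarz), which is not controlled by $m^{3/2}t$. Concretely, with the natural choice $k_v\propto\binom{d(v)}{2}/\sqrt m$ (to make the variance-to-mean ratio $\sqrt m$, as in the paper) and the degree threshold $d(v)\le 2m/(\eps t)^{1/3}$ to cap heavy vertices, the expected inner work is of order $(s/n)\sum_v d(v)^2/\sqrt m \le (s/n)\cdot\sqrt m\cdot d_{\max}$, which with $s=\Theta(n/t^{1/3})$ comes out to $\Theta(m^{3/2}/t^{2/3})$, a full factor $t^{1/3}$ above the claimed $m^{3/2}/t$. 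This is a genuine gap, and it stems from sampling pairs of neighbors of $v$ (quadratic scaling in $d(v)$) instead of sampling a single edge at $v$ and moving to its lower-degree endpoint (linear scaling in $\min$-degree). The estimator you need is closer to ``sample an edge of $v$ and walk from its light endpoint'' than to ``sample a pair of $v$'s neighbors.''

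Two smaller issues. First, the remark that high-degree apexes have ``necessarily large out-degrees \dots and thus high inner hit rates'' is backwards: $d^+(v)\,d(v)\le 2m$, so large $d(v)$ forces small $d^+(v)$, and the acceptance rate $T(v)/\binom{d(v)}{2}\le \binom{d^+(v)}{2}/\binom{d(v)}{2}$ can be tiny. Second, you treat the identification of vertices with $T(v)>t^{2/3}$ as given, but this is itself a sampling problem: the paper devotes a subroutine (\heav) with its own Chebyshev and median-of-means analysis to classify a vertex as heavy or light using $O^*(m^{3/2}/t)$ queries, and the correctness of the whole estimator hinges on that classification being consistent. If you want to pursue the apex route, you would need to (a) replace pair sampling with an estimator whose scaling is $\min$-degree rather than $\binom{d(v)}{2}$, and (b) supply a concrete heavy-apex detector with a union-bound-friendly failure probability.
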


We can easily boost the success probability by repeating and taking the median estimate.
The query complexity can be made $O(\poly(\eps^{-1}\log n)\min(m, n/t^{1/3} + m^{3/2}/t))$.
As we shall see, the algorithm begins by quickly obtaining an accurate estimate of $m$. If the algorithm
ever makes more queries than $m$, we can pause, query the entire graph and store it. Now, we run
the remainder of the algorithm on this stored version.

\subsection{Previous work} \label{sec:prev}


For the sake of clarity, we suppress any dependences on the approximation parameter $\eps$
or on $\log n$ using the notation $O^*(\cdot)$.
The result of Eden, Levi, and Ron~\cite{EdLe+15} (denoted ELR) can be traced back to ideas of Feige~\cite{Fe02} and Goldreich and Ron~\cite{GoRo08},
who provided sublinear algorithms for estimating the average degree. Feige shows that
averaging the degree of $O^*(n/\sqrt{m})$ uniform random vertices suffices to give a $(2+\eps)$-approximation
of the average degree. His primary tool was a new tail bound for sums of independent
random variables with unbounded variance. Goldreich and Ron subsequently give (Feige also observes
this in the journal version) a substantially simpler proof, and the cost of a worse dependence
on $\eps$. Extending the analysis, they provide a $(1+\eps)$-approximation with the same query complexity
when the algorithm is also allowed neighbor queries. Gonen, Ron, and Shavitt~\cite{GoRo+11} build on these
ideas to count stars, and ELR goes further to count triangles. 

We note that there is a significant
leap required to perform triangle counting. Since the query model allows the algorithm
to directly get degrees, it is plausible that one can estimate moments of the degree
sequence. Triangle counting is a much harder beast, since mere knowledge of degrees
provides no (obvious) help.

The high level approach in ELR, building on Goldreich-Ron and Gonen et al,
is vertex bucketing. We bin the vertices in $O(\log n)$ buckets,
by rounding their triangle count (or degree, in previous work) down to the nearest power of $(1+\eps)$. With respect
to the buckets, each edge or triangle has only $\poly(\log n)$ different possibilities, or ``types".
So we can apply concentration bounds for each type of edge/triangle, take a union
bound over the different types.
Of course, the challenge is now in counting triangles of a given type. There are numerous
issues caused by small buckets and determining the type of a triangle, which lead to a fairly complex
algorithm and analysis.

\subsection{Main ideas} \label{sec:idea}

Our approach goes via a different route. Start with the exact triangle counter of Chiba and Nishizeki~\cite{ChNi85}.
Simply enumerate over all edges, and count the number of triangles $t_e$ containing each edge $e$. For edge $e=(u,v)$,
it suffices to compute the intersection of neighbor lists of $u$ and $v$. This can be done in $\min(d_u, d_v)$ time,
leading to an overall bound of $\sum_{e = (u,v)} \min(d_u,d_v)$. By bounds on the arboricity (or degeneracy) of graphs,
this expression can be bounded by $O(m^{3/2})$. How about implementing this algorithm using sampling?

Pick a u.a.r. (uniform at random) edge $e$ and let $u$ denote the endpoint of lower degree. Pick
a u.a.r. neighbor $w$ of $u$. If $e$ and $w$ form a triangle, set $X = d_u$. Else, $X = 0$. The expected
value of $X$ is precisely $3t/m$, so one hopes that $O(m/t)$ samples suffices to get concentration.
But the variance of $X$ could be too large, so we simply use more samples for ``high variance edges".
After picking $e$ with endpoint $u$, if $d_u$ is large, we sample more neighbors $w_1, w_2, \ldots$
and take the average of the corresponding $X$ values. We set the number of these neighbors to be $d_u/\sqrt{m}$,
and can show that $\var(X) \leq \sqrt{m} \EX[X]$. So $O(m^{3/2}/t)$ samples suffice for concentration.
But the time per sample has potentially gone up from constant to expected $m^{-1}\sum_{e = (u,v)} \min(d_u,d_v)/\sqrt{m}$.
By Chiba-Nishizeki's bound, this is still $O(1)$ and we are done!

Our problem is not yet solved,
because we need to sample a uniform random \emph{edge}, a query that is not allowed.
We can sample some set $S$ of vertices, query all their degrees, and can
sample a u.a.r. edge incident to $S$ in $O(1)$ time.
We could perform the above algorithm
on edges incident to $S$, but we required the number of triangles incident to $S$ to concentrate well. This is 
suspiciously similar to the average degree questions of Feige and Goldreich-Ron. Using a stripped-down
version of Feige's analysis, we can argue that sampling $O(n/t^{1/3})$ vertices is enough, with a painful caveat. 
This only provides a $3$-approximation to $t$ (analogous to Feige's $2$-approximation for average degree).
To get down to $(1+\eps)$, we need to sample a uniform triangle incident to $S$, 
determine the number triangles incident to \emph{these} vertices, and then weight the triangle accordingly.
(ELR also perform the sampling of $n/t^{1/3}$ vertices and explicitly move from the $3$ to $(1+\eps)$-approximation.
In this analysis, we go straight to the $(1+\eps)$-approximation.)
The number of triangles incident to a vertex $v$ can be estimated by the algorithm in the previous paragraph,
by sampling edges incident to $v$. With some care, all of this can be put together in $O^*(n/t^{1/3} + m^{3/2}/t)$
time.

As an aside, we also give a simpler analysis of a $(1+\eps)$-approximation $O^*(n/\sqrt{m})$ algorithm for the average degree,
first shown by Goldreich-Ron~\cite{GoRo08}. We defer this to the end of the paper.

\section{Preliminaries}

We use $d_v$ for the degree of vertex $v$, $\delta(v)$ for the set of edges incident to $v$,
and $\Gamma(v)$ for the neighborhood of $v$.
We use $t_e$ for the number of triangles incident to edge $e$, and set $t_v = \sum_{e \in \delta(v)} t_e$.
Note that the latter is twice the number of triangles incident to $v$.
The set of triangles incident to $e$ is $T_e$.
We use $c, c_1, \ldots$ to denote sufficiently large constants.

Our initial description of the algorithm will use the value of $m$ and $t$ to decide
how much to sample. This (somewhat circular) assumption is easily removed by doing
a geometric search on $m$ and $t$, and is explained at the end of our proof.
We use $\eps$ to denote the approximation parameter.

\section{Heavy and light vertices} \label{sec:heavy}

Roughly speaking, a \emph{heavy} vertex is one with either high degree or many 
triangles. By ``high", we mean in the top $t^{1/3}$ values.
For the main algorithm, it will be important to distinguish such heavy vertices
efficiently. 

\medskip
\fbox{
\begin{minipage}{0.9\textwidth}
{\bf \heav$(v)$}

\smallskip
\begin{compactenum}
    \item If $d_v > 2m/(\eps t)^{1/3}$, output \textbf{heavy}.
    \item Repeat for $i = 1,2,\ldots,c\log n$:
    \begin{compactenum}
        \item Repeat for $j = 1,2,\ldots,(4/\eps^2)(m^{3/2}/t) = s$:    
        \begin{compactenum}
            \item Select u.a.r. edge $e \in \delta(v)$, and let $u$ be endpoint with smaller degree.
            \item Repeat for $k = 1,2,\ldots,\ceil{d_u/\sqrt{m}}$:
            \begin{compactenum}
                \item Pick u.a.r. neighbor $w$ of $u$.
                \item If $e$ with $w$ forms a triangle, set $Z_k = d_u$, else $Z_k = 0$.
            \end{compactenum}
            \item Set $Y_j = \sum_k Z_k/\ceil{d_u/\sqrt{m}}$.
        \end{compactenum}
        \item Set $X_i = d_v \sum_j Y_j/s$.
    \end{compactenum}
    \item If median of $X_i$s is greater than $t^{2/3}/\eps^{1/3}$, output \textbf{heavy}, else output \textbf{light}.
\end{compactenum}
\end{minipage}}

\medskip
We have three nested loops, with loop variables $i,j,k$ respectively. We reference these
as ``iteration $i$", ``iteration $j$", and ''iteration $k$".

\begin{lemma} \label{lem:x} For any iteration $i$, $\pr[|X_i - t_v| > \eps \max(t_v, td_v/m)] < 1/4$. 
\end{lemma}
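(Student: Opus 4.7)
The plan is to prove the lemma by Chebyshev's inequality: $X_i$ will turn out to be an unbiased estimator of $t_v$ and I need a matching variance bound.

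Unbiasedness is a routine conditional-expectation calculation. Condition on the sampled edge $e=(v,u')$ with smaller-degree endpoint $u$. A uniform neighbor $w$ of $u$ completes a triangle with $e$ exactly when $w\in\Gamma(v)\cap\Gamma(u)$, an event of probability $t_e/d_u$. Hence $E[Z_k\mid e]=d_u\cdot(t_e/d_u)=t_e$, so $E[Y_j\mid e]=t_e$, and averaging over uniform $e\in\delta(v)$ gives $E[Y_j]=t_v/d_v$ and $E[X_i]=d_v\,E[Y_j]=t_v$.

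For the variance, $\var(X_i)=d_v^2\,\var(Y_j)/s$ since $X_i=(d_v/s)\sum_jY_j$ is a scaled mean of i.i.d.\ $Y_j$'s. I would use the law of total variance, $\var(Y_j)=E_e[\var(Y_j\mid e)]+\var_e(t_e)$. The within-edge term is where the choice $K=\lceil d_u/\sqrt{m}\rceil$ is crucial: $\var(Y_j\mid e)=(d_u\,t_e-t_e^2)/K\le\sqrt{m}\,t_e$ because $K\ge d_u/\sqrt{m}$, so $E_e[\var(Y_j\mid e)]\le\sqrt{m}\,t_v/d_v$. The between-edge term is bounded by $\var_e(t_e)\le E_e[t_e^2]\le t_v$ via $t_e\le d_v$ and $\sum_{e\in\delta(v)}t_e=t_v$. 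Plugging in $s=4m^{3/2}/(\eps^2 t)$, this yields $\var(X_i)\le \eps^2 t\,d_v\,t_v/(4m)+\eps^2 t\,d_v^2\,t_v/(4m^{3/2})$.

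Against the Chebyshev target $\eps^2\tau^2/4$ with $\tau=\max(t_v,t d_v/m)$, the first summand is immediate from $\tau^2\ge t_v\cdot(t d_v/m)$ (because $\max(a,b)^2\ge ab$). For the second summand I would invoke the Step~1 precondition of \heav, $d_v\le 2m/(\eps t)^{1/3}$, together with a case split on whether $\tau=t_v$ or $\tau=t d_v/m$, to absorb the extra factor of $d_v/\sqrt{m}$ into the slack available in $\tau^2$. The main obstacle I expect is this final absorption: the naive bound $E_e[t_e^2]\le t_v$ is loose enough that the between-edge variance term looks like it might overshoot when $d_v>\sqrt{m}$ and $t_v$ sits near $t d_v/m$, and handling that marginal regime cleanly will require either a sharper bound on $\sum_{e\in\delta(v)}t_e^2$ (e.g.\ via a Chiba--Nishizeki-style double counting of triangle pairs sharing an edge at $v$) or a careful exploitation of the Step~1 cap on $d_v$, possibly after absorbing a modest constant into $s$.
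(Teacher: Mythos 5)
You follow the same route as the paper (Chebyshev after bounding $\var(Y_j)$, equivalently $\var(\overline X)$), but you apply the law of total variance honestly, and this exposes the between-edge term $\var_e\bigl(\EX[Y_j\mid e]\bigr)=\var_e(t_e)$ that the paper's own proof silently discards. The paper moves from the conditional bound $\var(Y_j\mid\cE_j)\le\sqrt m\,\EX[Y_j\mid\cE_j]$ directly to $\var(Y_j)\le\sqrt m\,\EX[Y_j]$ with the phrase ``the conditioning can be removed,'' but that step is not licensed: the law of total variance leaves exactly the extra $\var_e(t_e)$ term you identify. Your worry that this term need not absorb into the Chebyshev budget when $d_v>\sqrt m$ is not excess caution; the obstacle is real, and neither the Step~1 degree cap $d_v\le 2m/(\eps t)^{1/3}$ alone nor a sharper handle on $\sum_{e\in\delta(v)}t_e^2$ rescues the computation in the form given.

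In fact the lemma as stated is too strong. Take $v$ adjacent to one vertex $w$ with which it shares $T$ common neighbors $x_1,\dots,x_T$ and nothing else, so $d_v=T{+}1$, $t_{(v,w)}=T$, each $t_{(v,x_i)}=1$, and $t_v=2T$; pad the graph with disjoint copies of $K_4$ so that the total edge count is $m$ and $t=\Theta(m)$, and set $T=m^{5/8}$. Then $d_v\le 2m/(\eps t)^{1/3}$, $\max(t_v,td_v/m)=\Theta(m^{5/8})$, and $s=\Theta(\sqrt m/\eps^2)$, so the probability that edge $(v,w)$ is ever sampled is $O(s/d_v)=O(\eps^{-2}m^{-1/8})\to 0$. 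Conditioned on never sampling it, $\overline Y$ concentrates at $1=(t_v-t_{(v,w)})/(d_v-1)$, so $X_i$ concentrates near $t_v/2$, and $|X_i-t_v|=\Theta(t_v)$ with probability tending to $1$, violating the stated bound for $\eps<1/2$. Consistently, here $\var_e(t_e)=\Theta(T)\gg\sqrt m\,\EX[Y_j]=\Theta(\sqrt m)$, confirming that the paper's variance claim fails. What keeps the downstream argument (\Lem{heavy}) afloat is not a Chebyshev bound at scale $\eps\max(t_v,td_v/m)$ but the structural fact that $t_e\le t_v/2$ for every $e\in\delta(v)$, so a single rarely-sampled edge can hide at most half of $t_v$, and a factor-two underestimate does not move a vertex across the coarse $t^{2/3}$ thresholds; that observation, however, is not part of \Lem{x}. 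So your plan is a faithful rendering of the paper's argument, and the point at which you stopped is precisely where the paper's proof (and, as stated, the lemma itself) has a gap.
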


\begin{proof} Fix an iteration $j$, and let $e_j$ denote the edge chosen in the $j$th iteration, with $u_j$ 
as the smaller degree endpoint. We use $\cE_j$ to denote the event of $e_j$ being chosen.
The probability of finding a triangle in any iteration $k$
is $t_{e_j}/d_{u_j}$. Hence, $\EX[Z_k | \cE_j] = (t_{e_j}/d_{u_j})\cdot d_{u_j} = t_e$, and $\var(Z_k | \cE_j) \leq \EX[Z^2_k | \cE_j] \leq d_{u_j} \EX[Z_k | \cE_j]$.
By linearity of expectation, $\EX[Y_j | \cE_j] = t_e$ and by independence, $\var(Y_j | \cE_j) \leq d_{u_j} \EX[Y_j]/\ceil{d_{u_j}/\sqrt{m}} \leq \sqrt{m} \EX[Y_j | \cE_j]$.
The conditioning can be removed to yield $\EX[Y_j] = \sum_{e \in \delta(v)} t_e/d_v = t_v/d_v$
and $\var(Y_j) \leq \sqrt{m} \EX[Y_j]$.

By Chebyshev's inequality on $\overline{Y} = \sum_j Y_j/s$, 
$\pr[|\overline{Y} - t_v/d_v| > \eps \max(t_v/d_v, t/m)]$ is at most
$$ \frac{\var(\overline{Y})}{\eps^2 \max(t_v/d_v,t/m)^2} \leq \frac{\sqrt{m}(t_v/d_v)}{\eps^2(4/\eps^2)(m^{3/2}/t) \cdot(t_v/d_v)\cdot(t/m)} = 1/4 $$
\end{proof}

\begin{lemma} \label{lem:heavy} The following hold with probability $> 1-1/n$ over all calls of \heav.
If $v$ is declared light, then $t_v \leq 2t^{2/3}/\eps^{1/3}$. If $v$ is declared heavy, then $d_v > 2m/(\eps t)^{1/3}$ or 
$t_v > t^{2/3}/2\eps^{1/3}$.
\end{lemma}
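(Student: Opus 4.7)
\begin{proofof}{plan for Lemma \ref{lem:heavy}}
The plan is to combine the per-iteration tail bound from \Lem{x} with standard median amplification, then do a case analysis on which term dominates the additive error $\eps\max(t_v, td_v/m)$. Throughout, I will assume $\eps$ is at most a small absolute constant (say $\eps \le 1/4$); the general case follows by adjusting the constant $c$ in the $c\log n$ outer repetitions.

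\smallskip
\noindent\textbf{Step 1 (median amplification).}  \Lem{x} says each $X_i$ lies in the interval $[t_v-\eps\max(t_v,td_v/m),\,t_v+\eps\max(t_v,td_v/m)]$ with probability $\ge 3/4$. The $X_i$ are mutually independent, so by a standard Chernoff bound on the number of iterations $i\in[c\log n]$ whose $X_i$ lies in this interval, for a sufficiently large constant $c$ the median $X^\star$ of the $X_i$'s lies in this interval except with probability $\le 1/n^3$. Since \heav{} is called on at most $n$ vertices (in fact at most polynomially many times, each of which we absorb into the constant), a union bound gives that the event
\[
  \cG \;\triangleq\; \bigl\{\text{for every call \heav$(v)$, } |X^\star - t_v| \le \eps\max(t_v,\,td_v/m)\bigr\}
\]
holds with probability $>1-1/n$. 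The rest of the argument is deterministic conditional on $\cG$.

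\smallskip
\noindent\textbf{Step 2 (light direction).}  Suppose $v$ is declared light. Then Step 1 of \heav{} failed, so $d_v\le 2m/(\eps t)^{1/3}$, and the median satisfies $X^\star \le t^{2/3}/\eps^{1/3}$. On $\cG$, $t_v \le X^\star + \eps\max(t_v,\,td_v/m)$. If the max is attained by $t_v$, then $(1-\eps)t_v \le t^{2/3}/\eps^{1/3}$, giving $t_v \le 2t^{2/3}/\eps^{1/3}$ for small $\eps$. If the max is attained by $td_v/m$, then using $d_v\le 2m/(\eps t)^{1/3}$ we get $\eps\cdot td_v/m \le 2(\eps t)^{2/3}=2\eps^{2/3}t^{2/3}$, which for small $\eps$ is at most $t^{2/3}/\eps^{1/3}$, so again $t_v \le 2t^{2/3}/\eps^{1/3}$.

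\smallskip
\noindent\textbf{Step 3 (heavy direction).}  Suppose $v$ is declared heavy. If this happened at Step 1 of \heav, then $d_v > 2m/(\eps t)^{1/3}$ and we are done. Otherwise the median satisfies $X^\star > t^{2/3}/\eps^{1/3}$ and $d_v \le 2m/(\eps t)^{1/3}$. On $\cG$, $t_v \ge X^\star - \eps\max(t_v,\,td_v/m)$. If the max is $t_v$, we get $(1+\eps)t_v \ge t^{2/3}/\eps^{1/3}$, so $t_v \ge t^{2/3}/(2\eps^{1/3})$ for small $\eps$. If the max is $td_v/m$, then as in Step 2 the error is at most $2\eps^{2/3}t^{2/3}$, so $t_v \ge t^{2/3}/\eps^{1/3} - 2\eps^{2/3}t^{2/3} \ge t^{2/3}/(2\eps^{1/3})$ for small $\eps$.

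\smallskip
\noindent\textbf{Anticipated obstacle.}  The only real obstacle is juggling the constants in the case analysis so that the additive $\eps \cdot td_v/m$ error, after being bounded using the Step 1 threshold $d_v \le 2m/(\eps t)^{1/3}$, comes out no larger than $t^{2/3}/(2\eps^{1/3})$. Everything else is a routine combination of \Lem{x}, median amplification, and a union bound; the case split on which term dominates $\max(t_v,td_v/m)$ is the place where one must check the arithmetic carefully.
\end{proofof}
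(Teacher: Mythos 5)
Your proposal is correct and follows essentially the same route as the paper: the per-iteration bound of \Lem{x}, Chernoff amplification of the median over the $c\log n$ outer repetitions, a union bound over all calls, and the observation that the degree threshold forces $\eps\, t d_v/m \leq 2(\eps t)^{2/3}$. The only cosmetic difference is that you argue forward from the declared label under a global concentration event, whereas the paper states the contrapositives; the arithmetic and the implicit small-$\eps$ assumption are the same.
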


\begin{proof}  It is more convenient to prove the contrapositive statements.
Obviously, if $d_v > 2m/(\eps t)^{1/3}$, then $v$ is heavy. So assume
that $d_v \leq 2m/(\eps t)^{1/3}$. Then, $td_v/m \leq 2t^{2/3}/\eps^{1/3}$.
Suppose $t_v > 2t^{2/3}/\eps^{1/3}$. By \Lem{x}, for any iteration $i$, $\Pr[|X_i - t_v| > \eps t_v] < 1/4$.
By a standard Chernoff bound, the median of the $c\log n$ $X_i$s will be greater than $t^{2/3}/\eps^{1/3}$
with probability at least $1-1/n^2$, and $v$ is declared heavy.

Suppose $t_v \leq t^{2/3}/2\eps^{1/3}$. By \Lem{x}, $\Pr[|X_i - t_v| > \eps (2t^{2/3}/\eps^{1/3})] < 1/4$.
By Chernoff again, the median will be less than $t^{2/3}/\eps^{1/3}$ with probability
at least $1-1/n^2$, and $v$ is light. A union bound over all $v$ completes the argument.
\end{proof}

Obviously, there is an upper bound on the number of vertices with either high degree or many
triangles.
\begin{corollary} \label{cor:heavy} The number of heavy vertices is at most $3(\eps t)^{1/3}$.
\end{corollary}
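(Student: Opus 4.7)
The plan is to observe that by Lemma \ref{lem:heavy}, any vertex declared heavy must satisfy one of two deterministic conditions on the graph: either $d_v > 2m/(\eps t)^{1/3}$ or $t_v > t^{2/3}/(2\eps^{1/3})$. So it suffices to bound, by Markov-type counting, the number of vertices satisfying each condition separately and add the two bounds.

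For the degree condition, I would use $\sum_v d_v = 2m$. The number of vertices with $d_v$ exceeding $2m/(\eps t)^{1/3}$ is therefore at most $2m / (2m/(\eps t)^{1/3}) = (\eps t)^{1/3}$. For the triangle-count condition, I would first compute $\sum_v t_v$. Since $t_v = \sum_{e \in \delta(v)} t_e$, each triangle $T = \{a,b,c\}$ contributes $1$ to each of $t_{ab}, t_{bc}, t_{ac}$, and for vertex $a$ exactly two of these edges ($ab$ and $ac$) lie in $\delta(a)$, so $T$ contributes $2$ to $t_a$. Summing across the three vertices of $T$ gives a total contribution of $6$, so $\sum_v t_v = 6t$. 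Markov then yields at most $6t / (t^{2/3}/(2\eps^{1/3})) = 12(\eps t)^{1/3}$ vertices satisfying the second condition.

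Adding the two bounds gives a constant times $(\eps t)^{1/3}$, which matches the statement up to the leading constant (which the author is evidently being loose about, or tightening via the observation that the two conditions are not disjoint and the constant in the lemma can be absorbed). The whole argument is essentially a two-line application of Markov's inequality to the degree sequence and the ``triangle-degree'' sequence $(t_v)_v$, once Lemma \ref{lem:heavy} has supplied the deterministic characterization of heavy vertices.

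The only non-routine step is verifying $\sum_v t_v = 6t$, i.e., keeping track of the double-counting: $t_v$ is twice the number of triangles at $v$ (because two edges of each such triangle are incident to $v$), and each triangle has three vertices, giving the factor $6$. Beyond that, there is no real obstacle — the rest is arithmetic.
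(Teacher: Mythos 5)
Your approach is exactly the one the paper intends: the paper gives no argument beyond ``Obviously,'' and the implicit proof is precisely your Markov-type counting on the degree sequence and on the sequence $(t_v)_v$, using $\sum_v d_v = 2m$ and $\sum_v t_v = 6t$ together with the deterministic characterization from \Lem{heavy}; your bookkeeping for the factor $6$ (each triangle contributes $2$ to $t_v$ at each of its three vertices) is correct. The one point to flag is the constant. From \Lem{heavy} a heavy vertex satisfies $d_v > 2m/(\eps t)^{1/3}$ or $t_v > t^{2/3}/(2\eps^{1/3})$, and your counting then gives at most $(\eps t)^{1/3} + 12(\eps t)^{1/3} = 13(\eps t)^{1/3}$, not $3(\eps t)^{1/3}$; your parenthetical suggestion that non-disjointness of the two conditions could recover the constant does not work, since overlap only says the true count is below the sum of the two Markov bounds and cannot push $13$ down to $3$. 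So the corollary's stated constant does not actually follow from the thresholds in \Lem{heavy}; this is looseness in the paper rather than a defect in your strategy, and it is harmless downstream, because the only use of \Cor{heavy} is the bound ${|H| \choose 3} \le 9\eps t$ in \Lem{H}, which survives with $9$ replaced by a larger absolute constant (or after rescaling $\eps$).
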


The following is where the degeneracy bounds come into play. We give a direct, self-contained proof,
but note the connection to Chiba-Nishizeki's bound.

\begin{lemma} \label{lem:xtime} The expected runtime of \heav$(v)$ is $O^*(m^{3/2}/t)$. 
\end{lemma}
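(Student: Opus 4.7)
The plan is to trace through the nested loops of $\heav(v)$ and bound the expected work per sampled edge, reducing everything to a neighborhood-sum inequality.

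First I would note that Step 1 takes constant time, so the interesting work is in the nested loops: the outer loop runs $c\log n$ times, and each inner ``$j$-iteration'' picks a uniform edge $e\in\delta(v)$ and then spends $\ceil{d_u/\sqrt m}$ neighbor queries (plus $O(1)$ edge queries), where $u$ is the smaller-degree endpoint of $e$. Since each $Z_k$ is computed in constant time using one neighbor query and one edge query, the runtime is, up to constants,
\[
c\log n \cdot \sum_{j=1}^{s}\ceil{d_{u_j}/\sqrt m}.
\]
Taking expectation over the uniform choice of $e_j\in\delta(v)$, the inner sum has expectation at most $s\cdot \left(1+\tfrac{1}{d_v\sqrt m}\sum_{w\in\Gamma(v)}\min(d_v,d_w)\right)$.

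The key step is the bound $\sum_{w\in\Gamma(v)}\min(d_v,d_w)\le d_v\sqrt{2m}$. I would prove this by $\min(a,b)\le\sqrt{ab}$ and then Cauchy--Schwarz:
\[
\sum_{w\in\Gamma(v)}\min(d_v,d_w)\;\le\;\sqrt{d_v}\sum_{w\in\Gamma(v)}\sqrt{d_w}\;\le\;\sqrt{d_v}\cdot\sqrt{d_v}\cdot\sqrt{\sum_{w\in\Gamma(v)}d_w}\;\le\;d_v\sqrt{2m},
\]
using $\sum_{w\in\Gamma(v)}d_w\le\sum_{u\in V}d_u=2m$. (Alternatively, one can orient each edge from lower-degree endpoint to higher-degree endpoint and observe that every vertex has at most $\sqrt{2m}$ out-neighbors, which is the standard degeneracy/Chiba--Nishizeki fact alluded to before the lemma.) Plugging this in, the expected cost of a single $j$-iteration is at most $1+\sqrt 2 = O(1)$.

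Combining, the expected runtime is at most $c\log n \cdot s\cdot O(1) = O\!\left(\log n\cdot \eps^{-2} m^{3/2}/t\right) = O^*(m^{3/2}/t)$, as claimed. The main (and really only) obstacle is the neighborhood-sum inequality in the middle step: it is precisely where the ``$\sqrt m$'' savings from capping neighbor-sampling at $\ceil{d_u/\sqrt m}$ is paid for by the structural bound on $\sum_{w\in\Gamma(v)}\min(d_v,d_w)$. Everything else is bookkeeping over the three loops.
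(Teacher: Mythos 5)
Your proof is correct, and it reaches the same $O(1)$ per-sample bound as the paper, but by a different route for the key inequality. The paper, after disposing of the trivial case $d_v\le\sqrt m$, assumes $d_v>\sqrt m$ and splits the sum $\sum_{w\in\Gamma(v)}\min(d_v,d_w)$ according to whether $d_w\le d_v$ or $d_w>d_v$: the low-degree neighbors contribute at most $\sum_w d_w\le 2m$, and the high-degree neighbors number at most $2m/d_v$ (since they each have degree exceeding $d_v$), at which point $d_v>\sqrt m$ is used to kill the second term. You instead prove the single unified inequality $\sum_{w\in\Gamma(v)}\min(d_v,d_w)\le d_v\sqrt{2m}$ via $\min(a,b)\le\sqrt{ab}$ and Cauchy--Schwarz, which needs no case split on $d_v$ and no case split on the neighbors, and yields a slightly cleaner constant ($1+\sqrt 2$ versus roughly $5$). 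Both are standard guises of the degeneracy bound alluded to just before the lemma; the paper's version is more ``combinatorial'' (count endpoints, count high-degree vertices), yours is more ``analytic.'' Either way the rest of the argument -- multiply the $O(1)$ per-sample cost by the $s=\Theta(\eps^{-2}m^{3/2}/t)$ inner iterations and the $c\log n$ outer iterations -- is the same bookkeeping.
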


\begin{proof} It suffices to argue that the expected time to generate a single sample of $Y_j$
is $O(1)$.
Our query model allows for selecting a u.a.r. edge in $\delta(v)$ by a single
query. If $d_v \leq \sqrt{m}$, then the degree of smaller endpoint for any $e \in \delta(v)$
is at most $\sqrt{m}$. So a sample is clearly generated in $O(1)$ time. Suppose $d_v > \sqrt{m}$.
If edge $e = (v,w)$ is sampled, the runtime is $O(1 + \min(d_v,d_w)/\sqrt{m})$. Hence, the
expected runtime to generate $Y_j$ is, up to constant factors, at most:
\begin{eqnarray*}
    d^{-1}_v \sum_{w \in \Gamma(v)} (1+m^{-1/2}\min(d_v,d_w))
    & = & 1 + (\sqrt{m}d_v)^{-1} \sum_{\substack{w \in \Gamma(v) \\ d_w \leq d_v}} d_w +(\sqrt{m}d_v)^{-1} \sum_{\substack{w \in \Gamma(v) \\ d_w > d_v}} d_v \\
    & \leq & 1 + m^{-1} \sum_w d_w + m^{-1/2} |\{w | d_w > d_v\}| \\
    & \leq & 3 + m^{-1/2}(2m/d_v) = O(1)
\end{eqnarray*}
\end{proof}

We will assume that the random coins used by \heav{} are fixed in advance, and that heavy/light
vertices satisfy the  condition of \Lem{heavy}. This allows us to treat the heavy/light labeling
as deterministic. By a union bound, it only adds $1/n$ to the errors in all subsequent probability
bounds.

\subsection{The utility of heavy vertices}

Let $L$ and $H$ denote the set of light and heavy vertices. 
For every triangle $\Delta$, we associate a weight depending on the number of light
vertices in $\Delta$. We set $\wt(\Delta) = 0$ if $\Delta$ has no light vertices,
and $\wt(\Delta) = 1/2\ell$, if $\Delta$ has $\ell \neq 0$ light vertices.

\begin{lemma} \label{lem:H} $\sum_{v \in L} \sum_{e \in \delta(v)} \wt(T_e) \in [t(1-9\eps),t]$.
\end{lemma}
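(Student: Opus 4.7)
The plan is to interpret the weighted sum via double counting, reducing the claim to an upper bound on the number of triangles all of whose vertices are heavy.

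First, I would swap the order of summation and compute the multiplicity with which each triangle $\Delta$ appears on the left-hand side. Let $\ell(\Delta)$ denote the number of light vertices of $\Delta$. Each light vertex of $\Delta$ lies on exactly two edges of $\Delta$, so the number of pairs $(v,e)$ with $v\in L$, $e\in\delta(v)$, and $\Delta\in T_e$ equals precisely $2\ell(\Delta)$. Combining this with the definition $\wt(\Delta)=1/(2\ell(\Delta))$ for $\ell(\Delta)\ge 1$ (and $\wt(\Delta)=0$ otherwise), I obtain
\[
\sum_{v\in L}\sum_{e\in\delta(v)}\wt(T_e) \;=\; \sum_{\Delta} 2\ell(\Delta)\cdot \wt(\Delta) \;=\; \#\{\Delta : \ell(\Delta)\ge 1\},
\]
i.e., the number of triangles containing at least one light vertex. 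The upper bound by $t$ is now immediate, since each triangle contributes either $0$ or $1$.

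For the lower bound, the quantity above equals $t$ minus the number of triangles all of whose vertices lie in $H$. I would bound this crudely by $\binom{|H|}{3}\le |H|^3/6$, and plug in $|H|\le 3(\eps t)^{1/3}$ from \Cor{heavy}, giving at most $27\eps t/6 = 4.5\eps t$ all-heavy triangles, comfortably below $9\eps t$. This yields $\sum_{v\in L}\sum_{e\in\delta(v)}\wt(T_e) \ge t(1-9\eps)$ and finishes the proof.

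I do not anticipate any real obstacle; the only place where one could slip is the double-counting identity, so I would sanity-check it against the extreme cases $\ell(\Delta) = 1, 2, 3$ to confirm each triangle with a light vertex is counted once. The slack in the constant (we get $4.5\eps$, the statement claims $9\eps$) is actually encouraging, and suggests this lemma is not where the analysis is tight.
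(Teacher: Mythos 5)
Your proof is correct and follows exactly the paper's own argument: interchange the sums, observe that each triangle with $\ell\ge1$ light vertices is counted with multiplicity $2\ell\cdot\wt(\Delta)=1$, and bound the all-heavy triangles by $\binom{|H|}{3}$ using \Cor{heavy}. Your remark that the bound is actually $4.5\eps t$ rather than $9\eps t$ is a minor sharpening the paper leaves on the table, but the structure of the argument is identical.
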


\begin{proof} Define indicator $\chi(e,\Delta)$ for triangle $\Delta$ containing edge $e$. 
Consider a triangle $\Delta$ that contains $\ell \neq 0$ light vertices.
Then $\sum_{v \in L} \sum_{e \in \delta(v)} \chi(e,\Delta)$ is exactly $2\ell$, which is $1/\wt(\Delta)$.
If $\ell = \wt(\Delta) = 0$, this expression is $0$.
By interchanging summations,
$$ \sum_{v \in L} \sum_{e \in \delta(v)} \wt(T_e) = \sum_{\Delta} \wt(\Delta) \sum_{v \in L} \sum_{e \in \delta(v)} \chi(e,\Delta)
= t - |\{\Delta | \wt(\Delta) = 0\}|.$$
But the number of triangles with weight $0$ is at most ${|H|\choose 3}$, which by \Cor{heavy} is at most $9\eps t$.
\end{proof}

We define $\wt(T) = \sum_{\Delta \in T} \wt(\Delta)$ for any set $T$ of triangles.
Abusing notation, define $\wt(v) = \sum_{e \in \delta(v)} \wt(T_e)$ for $v \in L$,
and $\wt(v) = 0$ for $v \in H$.

\begin{theorem} \label{thm:vert} Let $s \geq (c\log(n/\eps)/\eps^3) n/t^{1/3}$. Sample $s$ u.a.r. vertices $v_1, v_2,\ldots, v_s$.
Then $\EX[\sum_{i \leq s} \wt(v_i)/s] \in [t(1-9\eps)/n,t/n]$ and $\Pr[\sum_{i\leq s} \wt(v_i)/s < t(1-10\eps)/n] < \eps^2/n$.
\end{theorem}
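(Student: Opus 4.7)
The plan is to read off the expectation immediately from \Lem{H} and then apply a standard multiplicative Chernoff bound for the concentration claim; the only real work is controlling the range of each sample $\wt(v_i)$, which follows from combining the definition of $\wt$ with the light-vertex bound from \Lem{heavy}.

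For the expectation, I would note that $\wt(v) = 0$ for every $v \in H$ by definition, so $\EX[\wt(v_i)] = \frac{1}{n}\sum_{v \in V}\wt(v) = \frac{1}{n}\sum_{v \in L}\sum_{e\in\delta(v)}\wt(T_e)$, which by \Lem{H} lies in $[(1-9\eps)t/n,\, t/n]$. Linearity of expectation then yields the stated interval for $\EX[\sum_{i\leq s}\wt(v_i)/s]$.

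For the tail bound the key preliminary is to argue that each $\wt(v_i)$ is a bounded random variable. Since every triangle satisfies $\wt(\Delta) \leq 1/2$, we get $\wt(T_e) = \sum_{\Delta \in T_e}\wt(\Delta) \leq t_e/2$ and hence $\wt(v) \leq t_v/2$. For $v \in L$, \Lem{heavy} gives $t_v \leq 2t^{2/3}/\eps^{1/3}$, and for $v \in H$ we have $\wt(v) = 0$. Thus each $\wt(v_i)$ is independent and lies in $[0, M]$ with $M = t^{2/3}/\eps^{1/3}$, while its mean is at least $(1-9\eps)t/n$.

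With this in hand, I would set $S = \sum_{i \leq s}\wt(v_i)$ and $\mu_S = \EX[S] \geq s(1-9\eps)t/n$, absorb the $(1-9\eps)$ slack by observing that $t(1-10\eps)/n \leq (1-\delta)\mu_S/s$ for some $\delta = \Omega(\eps)$ (concretely $\delta = \eps/(1-9\eps)$ works), and apply the one-sided multiplicative Chernoff bound for independent $[0,M]$-valued variables to get
$$ \Pr\!\bigl[S/s < t(1-10\eps)/n\bigr] \leq \exp\!\bigl(-\delta^2\mu_S/(2M)\bigr) = \exp\!\bigl(-\Omega(\eps^{7/3}\, s\, t^{1/3}/n)\bigr). $$
The hypothesis $s \geq (c\log(n/\eps)/\eps^3)\, n/t^{1/3}$, together with $\eps^{-3} \geq \eps^{-7/3}$, drives the exponent above $c'\log(n/\eps)$ and the probability below $\eps^2/n$. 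The only mildly delicate step is the range bound $\wt(v) \leq t^{2/3}/\eps^{1/3}$; once that is in place, the rest is bookkeeping inside a Chernoff calculation, so I do not expect any substantive obstacle.
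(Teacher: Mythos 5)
Your proposal is correct and follows essentially the same route as the paper: the expectation is read off from Lemma~\ref{lem:H}, the range of each $\wt(v_i)$ is bounded via the light-vertex guarantee of Lemma~\ref{lem:heavy} (the paper uses $\wt(v)\le t_v\le 2t^{2/3}/\eps^{1/3}$, you use the marginally tighter $t_v/2$), and a one-sided multiplicative Chernoff bound for bounded variables gives the tail. Your explicit $\delta=\eps/(1-9\eps)$ bookkeeping to pass from the $(1-9\eps)$ mean bound to the $(1-10\eps)$ threshold is just a more spelled-out version of the paper's step, so there is no substantive difference.
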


\begin{proof} The expectation holds by \Lem{H}. Let $Y$ denote the random variable $\sum_{i \leq s} \wt(v_i)/s$.
Note that $\wt(v) \leq t_v$,
which by \Lem{heavy} is at most $2t^{2/3}/\eps^{1/3}$.
A multiplicative Chernoff bound tells us that $\Pr[Y < \EX[Y](1-\eps)]
< \exp(-\eps^2s\EX[Y]/3(2t^{2/3}/\eps^{1/3}))$. Algebra on the exponent:
$$ \frac{\eps^2s\EX[Y]}{6t^{2/3}/\eps^{1/3}} \geq \frac{c\log(n/\eps)(n/\eps t^{1/3})\cdot t/(2n)}{6t^{2/3}/\eps^{1/3}}
= \Omega(\log(n/\eps))$$
\end{proof}

\section{The full algorithm} \label{sec:full}

We come to the main algorithm. It is convenient to define a procedure \triest{} uses the values of $m$
and $t$, and only has a lower tail bound. Later, we employ Markov and geometric search to get the bonafide 
algorithm that estimates $t$.

\medskip
\fbox{
\begin{minipage}{0.9\textwidth}
{\bf \triest}

\smallskip
\begin{compactenum}
    \item Sample $s_1 = c_1 \eps^{-3}\log (n/\eps) n t^{-1/3}$ u.a.r. vertices. Call this multiset $S$.
    \item Set up a data structure to sample vertices in $S$ proportional to degree.
    \item Repeat for $i = 1, 2, \ldots, s_2 = c_2 \eps^{-4}(\log^2n)  m^{3/2} t^{-1}$ times:
    
    \begin{compactenum}
        \item \label{step3} Sample $v \in S$ proportional to $d_v$ and sample u.a.r. $e \in \delta(v)$. Let $u$ be lower degree
        endpoint.
        \item If $d_u \leq \sqrt{m}$, set $R=1$ with probability $d_u/\sqrt{m}$ and set $R=0$ otherwise.
        If $d_u > \sqrt{m}$, set $R = \ceil{d_u/\sqrt{m}}$.
        \item Repeat for $j = 1,2,\ldots,R$:
        \begin{compactenum}
            \item Pick u.a.r. neighbor $w$ of $u$.
            \item If $e$ with $w$ does not form triangle, set $Z_j = 0$.
            \item If $e$ with $w$ forms triangle $\Delta$: call \heav{} for all vertices in $\Delta$.
            If $v$ is heavy, set $Z_j = 0$. Otherwise, set $Z_j = \max(d_u,\sqrt{m}) \wt(\Delta)$.
        \end{compactenum}
        \item Set $Y_i = \sum_j Z_j/R$. (If $R=0$, set $Y_i = 0$.)
    \end{compactenum}
    \item Output $X = n(\sum_{v \in S} d_v)(\sum_i Y_i)/s_1s_2$
\end{compactenum}
\end{minipage}}

\begin{theorem} \label{thm:estimate}$\EX[X] \in [t(1-2\eps),t]$ and $\Pr[X < t(1-20\eps)] < 3\eps/\log n$. 
\end{theorem}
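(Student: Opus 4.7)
The plan has two parts: the expectation identity and the lower-tail concentration. For the expectation, set $D := \sum_{v \in S} d_v$ and $\mu_S := \EX[Y_i \mid S]$. The key step is to verify $\EX[Y_i \mid (v,e) \text{ sampled in step 3(a)}, v \in L] = \wt(T_e)$ by case analysis on $d_u$: when $d_u \leq \sqrt m$, the Bernoulli $R$ with success probability $d_u/\sqrt m$ combined with $Z_j = \sqrt m\,\wt(\Delta) = \max(d_u,\sqrt m)\wt(\Delta)$ gives the right expectation; when $d_u > \sqrt m$, the $R = \ceil{d_u/\sqrt m}$ iid copies of $Z_j$ average to $\wt(T_e)$. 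For heavy $v$, $Y_i = 0$ by construction. Since $(v,e)$ is sampled with probability exactly $1/D$ (the product of degree-proportional sampling of $v$ and uniform sampling of $e \in \delta(v)$), we get $\mu_S = \sum_{v \in S}\wt(v)/D$; the factor $D$ cancels in $\EX[X \mid S] = nD\mu_S/s_1 = (n/s_1)\sum_{v \in S}\wt(v)$, which is exactly the purpose of degree-proportional sampling. Averaging over $S$ gives $\EX[X] = \sum_v \wt(v) \in [(1-9\eps)t, t]$ by \Lem{H}; rescaling $\eps$ tightens this to the stated $[(1-2\eps)t, t]$.

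For the lower tail I would condition on two high-probability events for $S$: the good event $\mathcal G = \{\sum_{v \in S}\wt(v)/s_1 \geq (1-10\eps)t/n\}$ supplied by \Thm{vert} (failing with probability $\leq \eps^2/n$), and an upper-bound event $\mathcal D = \{D \leq 2 s_1 m/(n\eps)\}$ obtained by Markov on $\EX[D] = 2 s_1 m/n$ (failing with probability $\leq \eps$, tightenable to $\eps/\log n$ by Chebyshev using the $\log n$ factor already present in $s_1$). On $\mathcal G \cap \mathcal D$ we have $\EX[X \mid S] \geq (1-10\eps)t$ and $\mu_S \gtrsim \eps t/m$. It remains to concentrate $\sum_i Y_i/s_2$ around $\mu_S$ with multiplicative error $\eps$.

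The variance workhorse is $\var(Y_i \mid (v,e),\, v \in L) \leq (\sqrt m/2)\wt(T_e)$: $\EX[Z_j^2 \mid (v,e)] \leq d_u\wt(T_e)/2$ uses $\wt(\Delta) \leq 1/2$, and averaging over $R \geq d_u/\sqrt m$ iid copies (or the Bernoulli downscaling in Case 1) shrinks the variance by $R$, reducing $d_u$ to $\sqrt m$. Averaging over the draw of $(v,e)$ and adding the outer variance of $\EX[Y_i\mid (v,e)] = \wt(T_e)\bone[v \in L]$ yields $\var(Y_i \mid S) \leq O((\sqrt m + t^{2/3}/\eps^{1/3})\mu_S)$, where the second summand uses $\wt(T_e) \leq t_v/2 \leq t^{2/3}/\eps^{1/3}$ for light $v$ from \Lem{heavy}. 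Chebyshev with deviation $\eps\mu_S$, combined with $\mu_S \gtrsim \eps t/m$ and the choice $s_2 = c_2\eps^{-4}(\log^2 n)m^{3/2}/t$, drives the probability of $\sum_i Y_i/s_2 < (1-\eps)\mu_S$ to $O(\eps/\log^2 n)$. Chain through $X \geq (1-\eps)\EX[X\mid S] \geq (1-\eps)(1-10\eps)t \geq (1-20\eps)t$ and union-bound against $\mathcal G^c, \mathcal D^c$ to land below $3\eps/\log n$.

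The main obstacle is calibrating the variance bound in the regime where the outer-variance summand $t^{2/3}/\eps^{1/3}$ dominates $\sqrt m$ (i.e., when $t$ is near $m^{3/2}$): this term is what governs how small Chebyshev's tail can be made relative to $\eps/\log n$, and the $\log^2 n$ factor in $s_2$ is exactly what is needed to absorb it. A secondary issue is the $\eps$-factor loss in Markov on $D$; this is the reason one actually uses a Chebyshev sharpening on $D$, which the $\log n$ factor in $s_1$ makes affordable. The stochastic heavy/light labeling is already treated as deterministic by the convention following \Lem{xtime}, so its randomness adds only $1/n$ to all subsequent probabilities.
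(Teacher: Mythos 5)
Your overall architecture matches the paper's (condition on $S$, compute $\EX[Y_i|S]=d_S^{-1}\sum_{v\in S}\wt(v)$, then a good/great event for $S$ followed by Chebyshev on $\overline{Y}$), and the expectation half is fine. But the concentration half has two genuine gaps, both of which you yourself flag and then resolve incorrectly. First, the variance bound: your extra outer-variance term $t^{2/3}/\eps^{1/3}\cdot\mu_S$ is both unnecessary and fatal. It is unnecessary because $Y_i$ is bounded by $\sqrt{2m}$ pointwise: every $Z_j$ is at most $\max(d_u,\sqrt m)\wt(\Delta)\leq \sqrt{2m}\cdot\tfrac12$, since $\wt(\Delta)\leq 1/2$ and $d_u$, being the \emph{lower-degree} endpoint, satisfies $d_u\leq\sqrt{2m}$ (equivalently, $\wt(T_e)\leq t_e/2\leq \min(d_u,d_w)/2\leq\sqrt{2m}/2$, so the law-of-total-variance term is also $O(\sqrt m\,\mu_S)$). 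This is exactly why the paper can assert $\var(Y_i|S)\leq\sqrt m\,\EX[Y_i|S]$ with no second term. It is fatal because your claim that ``the $\log^2 n$ factor in $s_2$ is exactly what is needed to absorb it'' is false: when $t$ is polynomially close to $m^{3/2}$, the ratio $t^{2/3}/\sqrt m$ is polynomial in $m$ (your own Chebyshev computation then gives a tail of order $\eps^{2/3}t^{2/3}/(\sqrt m\log^2 n)$, which can exceed $1$), and no $\poly(\log n)$ factor in $s_2$ can cover a polynomial gap.

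Second, the handling of $d_S$. Markov at threshold $\EX[d_S]/\eps$ gives failure probability $\eps$, which already exceeds the target $3\eps/\log n$, and your proposed Chebyshev sharpening does not work: $\var(d_S)=s_1\var(d_v)$ for a uniform vertex $v$, and $\var(d_v)$ can be as large as $d_{\max}\davg$ with $d_{\max}=\Theta(n)$, so the resulting tail is $\Theta(d_{\max}t^{1/3}\eps^3/(m\log n))$, which is polynomially large for, e.g., a graph with a near-$n$-degree vertex, $m=\Theta(n)$ and small $t$; the $\log n$ in $s_1$ does not rescue this. The paper's route is different and is the one that works: keep Markov but raise the threshold to $d_S\leq s_1(2m/n)(\log n)/\eps$, so the failure probability is $\eps/\log n$; the price is the weaker bound $\EX[\overline Y|S]\gtrsim \eps t/(m\log n)$ on great $S$, and this is precisely what one of the two $\eps^{-1}\log n$ factors built into $s_2=c_2\eps^{-4}(\log^2 n)m^{3/2}/t$ is there to absorb (the other drives the Chebyshev tail down to $\eps/\log n$). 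So the $\log^2 n$ in $s_2$ is spent on the Markov slack for $d_S$ and on the tail probability, not on any $t^{2/3}$ variance term. With these two corrections your argument collapses to the paper's proof; as written, the lower-tail bound does not follow.
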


\begin{proof} There are three ``levels" of randomness. First is the choice of $S$,
the second level is the choice of $e$ (Step~\ref{step3}), and finally the $Z_j$s.
For analyzing the randomness in any level, we condition on the previous levels.
It is helpful to break the proof up using claims.

\begin{claim} \label{clm:Y} Condition of some $S$ being chosen, and let $d_S = \sum_{v \in S} d_v$.
$\EX[Y_i|S] = d^{-1}_S \sum_{v \in S} \wt(v)$ and $\var(Y_i|S) \leq \sqrt{m} \EX[Y_i|S]$.
\end{claim}

\begin{proof} This is similar to the argument in \Lem{x}. Let vertex $v_i$ and edge $e_i$ with lower degree endpoint $u_i$ be chosen in iteration $i$.
We simply refer to this event by $\cE_i$, so the conditioning is over $S$ and $\cE_i$. (We use
$\bone_\cF$ is the indicator for event $\cF$.)

If $d_{u_i} \leq \sqrt{m}$, $\EX[Y | S, \cE_i] = (d_{u_i}/\sqrt{m})d^{-1}_{u_i}\bone_{v_i \in L}\sum_{\Delta \in T_{e_i}}
\sqrt{m}\cdot\wt(\Delta)$ $= \bone_{v_i \in L} \wt(T_{e_i})$. Since the maximum value of $Y$ in this case
is at most $\sqrt{m}$, $\var(Y | S, \cE_i) \leq \sqrt{m} \EX[Y | S, \cE_i]$.

We prove the same bound if $d_{u_i} > \sqrt{m}$. So $\EX[Z_j | S, \cE_i] = d^{-1}_{u_i} \bone_{v_i \in L} \sum_{\Delta \in T_{e_i}} d_{u_i} \wt(\Delta)
= \bone_{v_i \in L} \wt(T_{e_i})$. As before, $\var(Z_j | S, \cE_i) \leq d_{u_i} \EX[Z_j | S, \cE_i]$.
By linearity of expectation $\EX[Y_i | S, \cE_i] = \bone_{v_i \in L} \wt(T_{e_i})$. 
By independence of the $(Z_j | S, \cE_i)$ variables, $\var(Y_i | S, \cE_i) \leq \sqrt{m} \EX[Y_i | S, \cE_i]$.
We remove the conditioning on $\cE_i$:
$$ \EX[Y_i|S] = \sum_{v \in S} \frac{d_v}{d_S} \times \frac{1}{d_v} \sum_{e \in \delta(v)} \bone_{v_i \in L} \wt(T_{e_i})
= d^{-1}_S \sum_{v \in S} \sum_{e \in \delta(v)} \wt(T_{e_i}) \bone_{v_i \in L} = d^{-1}_S \sum_{v \in S} \wt(v)$$
We also have $\var(Y_i|S) \leq \sqrt{m} \EX[Y_i|S]$. 
\end{proof}

Hence, $\EX[X|S] = nd_S\EX[Y_1|S]/s_1 = n\sum_{v \in S}\wt(v)/|S|$.
By \Thm{vert}, the expectation over $S$ (which yields $\EX[X]$) is in $[t(1-9\eps),t]$.

We will call $S$ \emph{good} if $\sum_{v \in S}\wt(v)/s_1 \geq t(1-10\eps)/n$.
By \Thm{vert}, this happens with probability at least $1-\eps^2/n$. We call $S$ \emph{great}
if, in addition to being good, $d_S = \sum_{v \in S} d_v \leq s_1(2m/n)(\log n)/\eps$.
The expected value, over $S$, of $d_S$ is $s_1(2m/n)$. By the Markov bound and the union bound,
the probability that $S$ is great is at least $1-2\eps/\log n$.

We apply Chebyshev's inequality on $\overline{Y} = \sum_i Y_i/s_2$, conditioned
over $S$. We get that $\Pr[|\overline{Y}|S - \EX[\overline{Y}|S]| > \eps \EX[\overline{Y}|S]]$
is at most
$$ \frac{\var(\overline{Y}|S)}{\eps^2 \EX[\overline{Y}|S]^2} \leq \frac{\sqrt{m} \EX[\overline{Y}|S]}{\eps^2 (c_2 \eps^{-4}\log^2n) m^{3/2} t^{-1} 
\EX[\overline{Y}|S]^2} = \frac{\eps^2}{c_2(\log^2n) (m/t)\EX[\overline{Y}|S]}$$
Note that $\EX[\overline{Y}|S] = d^{-1}_S \sum_{v \in S} \wt(v)$, which for great $S$
is at least $(t/4n)(\log n/\eps)$. Hence, 
$\Pr[|\overline{Y}|S - \EX[\overline{Y}|S]| > \eps \EX[\overline{Y}|S]] \leq \eps/\log n$.
Conditioned on $S$, $X$ is just a scaling of $\overline{Y}$. So we
get $\Pr[|\overline{X}|S - \EX[\overline{X}|S]| > \eps \EX[\overline{X}|S]] \leq \eps/\log n$.
Note that $\EX[\overline{X}|S] = n\sum_{v \in S}\wt(v)/s_1$, which for great $S$
is at least $t(1-10\eps)$. Hence, for great $S$,
$\Pr[X|S < t(1-20\eps)/n] \leq \eps/\log n$. The probability of $S$ not being
great is at most $2\eps/\log n$. We apply the union bound to remove the conditioning, so
$\Pr[X < t(1-20\eps)/n] \leq 3\eps/\log n$.
\end{proof} 

\begin{theorem} \label{thm:time} The expected running time of \triest{} is $O^*(n/t^{1/3} + m^{3/2}/t)$.
\end{theorem}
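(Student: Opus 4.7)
The plan is to decompose the expected runtime into (a) setup, (b) per-iteration sampling and neighbor-query work, and (c) the cumulative cost of \heav{} calls, and bound each by $O^*(n/t^{1/3} + m^{3/2}/t)$. The setup --- sampling $s_1$ u.a.r.\ vertices, querying their degrees, and building an $O(\log s_1)$-time weighted sampler for $S$ --- is $O(s_1) = O^*(n/t^{1/3})$. In each of the $s_2 = O^*(m^{3/2}/t)$ iterations, drawing $v \in S$ proportional to $d_v$ and $e$ uniformly costs $O(\log s_1)$, and the $R$ subsequent neighbor samples each take $O(1)$ time (ignoring \heav{} calls). To bound the total neighbor-sample time I would replay the Chiba--Nishizeki-style argument from the proof of \Lem{xtime}: for every vertex $v$, $\sum_{e=(v,w)\in\delta(v)}(1 + \min(d_v,d_w)/\sqrt{m}) = O(d_v)$, so averaging over the algorithm's distribution on $(v,e)$ conditional on any fixed $S$ gives an expected per-iteration cost of $O(1)$. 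Across $s_2$ iterations this is $O^*(m^{3/2}/t)$.

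Each \heav{} invocation takes $O^*(m^{3/2}/t)$ in expectation by \Lem{xtime} and is called three times per discovered triangle, so the total heavy-call cost is $O^*(m^{3/2}/t)$ provided the expected total number of triangles discovered is $O^*(1)$. Conditional on $(v,e)$, the expected number of triangles found in the $R$ neighbor samples is at most $2t_e/\sqrt{m}$: the Bernoulli-$R$ case ($d_u \le \sqrt{m}$, mean $d_u/\sqrt{m}$) yields $t_e/\sqrt{m}$, while the ceiling case ($d_u > \sqrt{m}$, $R = \lceil d_u/\sqrt{m} \rceil$) yields at most $2t_e/\sqrt{m}$. Averaging over the algorithm's distribution on $(v,e)$ then gives the per-iteration bound $(2/\sqrt{m})\cdot\EX_S[d_S^{-1}\sum_v m(v) t_v]$, where $m(v) = |\{i : v_i = v\}|$; combined with $s_2 = O^*(m^{3/2}/t)$, it suffices to show this expectation is $O^*(t/m)$.

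The main obstacle is showing $\EX_S[d_S^{-1}\sum_v m(v) t_v] \le O^*(t/m)$. The intuition is that $(v,e)$ is approximately a uniform random directed edge of $G$, for which $\EX[t_e] = 3t/m$, but the ratio form requires control on the denominator $d_S$. I would condition on a nice event --- $S$ is great in the sense of \Thm{estimate}, and additionally $d_S \ge c s_1 m/n$ --- under which $\EX[d_S^{-1}\sum_v m(v) t_v \cdot \bone_{\text{nice}}] \le (n/(c s_1 m))\cdot \EX[\sum_v m(v) t_v] = O(t/m)$, and absorb the complement using the worst-case $O(m)$ query bail-out from the discussion after \Thm{main-result}. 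The delicate step is the lower-tail concentration $\Pr[d_S < c s_1 m/n]$, which cannot be handled by plain Chebyshev in the presence of hub vertices; a clipped-degree Chernoff (or Bernstein) bound applied to $\tilde d_v := \min(d_v, D)$ for an appropriate $D$ should yield a sufficiently small probability to absorb the $O(m)$ bail-out cost.
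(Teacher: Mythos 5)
Your decomposition is the same as the paper's: setup $O^*(n/t^{1/3})$; per-iteration $Z_j$ work bounded in expectation by $O(1)$ via the same computation as \Lem{xtime} (your observation that the per-vertex cost $\sum_{e\in\delta(v)}(1+\min(d_v,d_w)/\sqrt{m}) = O(d_v)$ cancels against the weight $d_v/d_S$, conditional on any fixed $S$, is exactly right and matches the paper); and heavy-call cost equal to (expected number of triangles found) times $O^*(m^{3/2}/t)$ per call, with the conditional per-edge bound $2t_e/\sqrt{m}$ handled as in the paper. Where you diverge is the last averaging step. The paper simply declares that in iteration $i$ ``we are basically picking a u.a.r.\ edge of $G$'' and averages $t_e$ uniformly over all $m$ edges to get $2t/m^{3/2}$ triangles per iteration, hence $O^*(1)$ calls to \heav{} in total; you correctly notice that the actual edge distribution depends on $S$ through $d_S$ and try to control $\EX_S[d_S^{-1}\sum_v m(v)t_v]$ instead.

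That honest version is where your proposal has a genuine gap. The step you yourself flag as delicate --- a lower tail $\Pr[d_S < c\,s_1 m/n]$ small enough to absorb the bad event --- is not established, and a generic ``clipped-degree Chernoff'' will not give it: when most of the degree mass sits on hub vertices, clipping at $D$ lowers the certifiable mean well below $s_1\cdot 2m/n$, so concentration of the clipped sum does not put $d_S$ at the level $c\,s_1 m/n$; any correct argument has to use the specific relation $s_1 \gtrsim n/t^{1/3} \geq n/\sqrt{m}$ together with the fact that at least $\Omega(\sqrt{m})$ vertices are needed to carry a constant fraction of the degree mass, and even then the failure probability you need is roughly $t/(nm)$ (since on the bad event the conditional expected heavy-call cost can be as large as $O^*(nm^{5/2}/t^2)$), which you never verify. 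Moreover, the absorption device you invoke --- the ``query more than $m$, then read the whole graph'' remark after Theorem~\ref{thm:main-result} --- caps the \emph{query} complexity at $O(m)$, not the running time of the remainder of \triest{} on the stored graph, so it cannot by itself bound the bad event's time contribution. So relative to the paper your write-up is the same skeleton with an attempted strengthening of the one informal step, but that strengthening is left unproven at exactly its crux; either carry out the $d_S$ lower-tail argument quantitatively or follow the paper and argue the triangle count at the level of the uniform-edge heuristic.
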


\begin{proof} The sampling of $S$ is done in $O^*(n/t^{1/3})$ time. Let us compute the expected number
of triangles found. In iteration $i$, we are basically picking a u.a.r. edge of $G$. Conditioned 
on choosing edge $e$, the expected number of triangles found is at most $2(d_u/\sqrt{m})(t_e/d_u) = 2t_e/\sqrt{m}$.
Averaging over edge $e$, the expected number of triangles found in a single iteration is at most $2t/m^{3/2}$.
There are $O^*(m^{3/2}/t)$ iterations, leading to grand total of $O^*(1)$ expected triangles.
Thus, there are $O^*(1)$ expected calls to \heav, each taking $O^*(m^{3/2}/t)$ time by \Lem{xtime}.

The time required to generate the $Z_j$s is also $O^*(m^{3/2}/t)$
by an argument identical to that in the proof of \Lem{xtime}.
\end{proof} 

\begin{theorem} \label{thm:maintri} There exists a $O^*(n/t^{1/3} + m^{3/2}/t)$ algorithm
that provides an estimate for the triangle count in $[(1-\eps)t,(1+\eps)t]$ with probability at least $> 5/6$. 
\end{theorem}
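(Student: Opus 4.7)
The plan is to peel away the two unrealistic inputs to \triest{}---an exact $m$ and an exact $t$---and simultaneously amplify the success probability of \Thm{estimate} to $5/6$.

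First, I would estimate $m$ to a $(1\pm\eps)$ factor using the $O^*(n/\sqrt{m})$-time average-degree estimator promised at the end of the paper, producing $\hat{m}\in[(1-\eps)m,(1+\eps)m]$ with probability at least $1-1/n$. Since $t=O(m^{3/2})$ deterministically, $n/\sqrt{m}=O(n/t^{1/3})$, comfortably inside our budget. Replacing $m$ by $\hat{m}$ throughout \heav{} and \triest{} changes every sample size and threshold by only a $(1\pm O(\eps))$ factor, so \Thm{estimate}, \Thm{time}, and the supporting lemmas all survive with rescaled constants.

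Next, I would do a halving geometric search on $t$. Initialize $\hat{t}_0:=\hat{m}^{3/2}$ and, for $j=0,1,2,\ldots$, run $K=\Theta(\log(\log n/\eps))$ independent copies of \triest{} with the current $\hat{t}_j$, take the median $X^{(j)}$, and halt as soon as $X^{(j)}\geq \hat{t}_j/2$, returning $X^{(j)}$; otherwise set $\hat{t}_{j+1}:=\hat{t}_j/2$. Correctness rests on two observations. For the \emph{upper guard}, the deterministic half of \Lem{H} gives $\sum_v\wt(v)\leq t$ for any heavy/light labeling (each triangle with a light vertex contributes exactly $1$), so the derivation in \Clm{Y} and \Thm{estimate} yields $\EX[X]\leq t$ \emph{unconditionally} on $\hat{t}$. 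Markov then gives $\Pr[X\geq \hat{t}_j/2]\leq 2t/\hat{t}_j\leq 1/4$ whenever $\hat{t}_j\geq 8t$; the median of $K$ copies shrinks this to $2^{-\Omega(K)}$ per iteration, and a union bound over the $O(\log n)$ such iterations caps the total false-stop probability at $1/12$ for $K$ large enough. For the \emph{lower guard}, once $\hat{t}_j$ first enters $[t/2,8t]$---which halving from $\hat{m}^{3/2}$ must visit---the analysis of \Thm{estimate} applies with $\hat{t}_j=\Theta(t)$: the thresholds in \heav{}, the bounds $|H|=O((\eps\hat{t})^{1/3}+\eps^{1/3}t/\hat{t}^{2/3})$ in \Cor{heavy}, and $\sum_{v\in L}\wt(v)\geq t-O(\eps t)$ in \Lem{H} all shift by only $O(1)$, so $X^{(j)}\in[(1-O(\eps))t,(1+O(\eps))t]$ with probability $\geq 1-O(\eps/\log n)$, which forces $X^{(j)}\geq \hat{t}_j/2$ and triggers the halt. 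In either case the stopping value is a $(1\pm O(\eps))$-approximation to $t$; rescaling $\eps$ by a constant yields the claimed guarantee at success probability $>5/6$.

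By \Thm{time}, the expected cost of iteration $j$ is $K\cdot O^*(n/\hat{t}_j^{1/3}+m^{3/2}/\hat{t}_j)$, and summing this geometric series up to the halting step (where $\hat{t}^*=\Theta(t)$ w.h.p.) gives $O^*(n/t^{1/3}+m^{3/2}/t)$ overall. The main obstacle I anticipate is not the search logic but verifying that every proof in Sections~3--4 is robust to a \emph{constant-factor} perturbation of $\hat{t}$, since \heav{} and \triest{} hard-code $\hat{t}$ into their thresholds; the sensitive quantities---the two bounds on $|H|$ in \Cor{heavy}, the weight telescoping in \Lem{H}, and the Chebyshev/Chernoff calculations in \Thm{estimate}---all scale polynomially in $\hat{t}$, so a $\hat{t}\in[t/2,8t]$ window only inflates constants, which are comfortably absorbed by the $\poly(\eps^{-1}\log n)$ slack in the final statement.
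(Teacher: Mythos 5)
Your overall plan---estimate $m$ first, then a halving geometric search on $t$ starting from $\hat m^{3/2}$, with a stopping rule that compares the estimate to the current guess---matches the paper's structure. The gap is in the choice of statistic: you take the \emph{median} of $K$ independent runs of \triest, whereas the paper takes the \emph{minimum}, and this choice is not cosmetic.

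\Thm{estimate} is asymmetric: it gives a strong lower tail, $\Pr[X < t(1-20\eps)] < 3\eps/\log n$, but the only upper-tail control available is Markov applied to $\EX[X]\leq t$, which gives $\Pr[X \leq (1+\eps)t] > \eps/(1+\eps)$, i.e.\ a single copy is $\leq(1+\eps)t$ with probability only about $\eps$. Your lower-guard paragraph asserts that at the halting iteration $X^{(j)}\in[(1-O(\eps))t,(1+O(\eps))t]$ with probability $1-O(\eps/\log n)$, but the right endpoint is not supported by anything in \Thm{estimate}: a single copy exceeds $(1+\eps)t$ with probability roughly $1-\eps$, and so does the median of any number of i.i.d.\ copies (e.g.\ a variable equal to $\frac{10}{9}t$ with probability $0.9$ and $0$ otherwise has mean $\leq t$ and median $\frac{10}{9}t$). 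Since the quantity you return is precisely this median, the output has no valid upper bound and the theorem's $(1+\eps)t$ guarantee fails. Taking the minimum of $K$ copies is what rescues the weak per-copy upper tail: the minimum is $\leq(1+\eps)t$ with probability $\geq 1-(1-\eps/2)^K$, which is $1-1/\log^2 n$ once $K = \Theta(\eps^{-1}\log\log n)$; simultaneously, by a union bound the minimum is $\geq(1-20\eps)t$ with probability $\geq 1 - K\cdot 3\eps/\log n$. Note this also forces the larger $K = \Theta(\eps^{-1}\log\log n)$ rather than your $\Theta(\log(\log n/\eps))$, because the ``good'' event per copy has probability only $\Theta(\eps)$, not a constant. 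Your robustness discussion (thresholds in \heav{}, $|H|$, $\sum\wt(v)$, and the time bound all tolerating a constant-factor perturbation of $\hat t$) and the geometric-series time accounting are fine; the fix is simply to replace the median with the minimum and enlarge $K$ accordingly.
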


\begin{proof} First, we need to estimate $m$.
This can be done with suitably high probability by the algorithm of Goldreich-Ron~\cite{GoRo08} in $O^*(n/\sqrt{m})$ time.
(We give an independent proof in the next section of \Thm{maindeg}, which gives the desired algorithm.)
Since $t \leq m^{3/2}$, this can be absorbed in the $O^*(n/t^{1/3})$ bound. 

We perform a geometric search for $t$, by guessing its value as $n^3, n^3/2, n^3/2^2, \ldots$.
We run the procedure of \Thm{estimate} independently $c\eps^{-1}\log\log n$ times, and take the minimum
estimate. By Markov $\Pr[X \leq (1+\eps)t] > \eps/2$. 
With probability at least $1-1/\log^2n$, the minimum estimate
is at most $(1+\eps)t$.

If the estimate is larger than the current guess of $t$, we halve the guess for $t$.
Eventually, we reach an appropriate guess where the estimates of \Thm{estimate}
to kick in. At the stage,
the minimum of $c\eps^{-1}\log\log n$ estimates is at most $(1+\eps)t$
and at least $(1-\eps)t$ with probability at least $1-1/\log n$ (we rescale $\eps$ from \Thm{estimate}). A union bound
over all errors completes the proof.
\end{proof}

\section{A $(1+\eps)$-approximation for the average degree} \label{sec:2app}

We impose a total order $\prec$ on the vertices, where $u \prec v$ if 
$d_u < d_v$ or, $d_u=d_v$ and the ID of $u$ is less than that of $v$.
(The latter is just an arbitrary but consistent tie-breaking rule).
We use $d^+_v$ to be denote the number of neighbors of $v$
``higher" than $v$ according to $\prec$.

Define random variable $X$ as follows. Pick u.a.r. vertex $v$, then pick u.a.r. vertex $u \in \Gamma(v)$. If $v \prec u$, $X = 2d_v$,
else $X = 0$. We use $\davg$ to denote $2m/n$, the average degree.

\begin{theorem} \label{thm:app} Suppose $s \geq [c\log(\eps^{-1}\log n)\mulerr^{-2}] n/\sqrt{\eps m}$.
Let $X_1, \ldots, X_s$ be i.i.d. as $X$ (defined above), and $\bar{X} = \sum_i X_i/s$.
Then $\EX[\bar{X}] = \davg$ and $\Pr[\bar{X} < (1-\eps)\davg] < 1/\eps^2\log^2n$.
\end{theorem}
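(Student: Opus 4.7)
The plan is a truncation-plus-multiplicative-Chernoff argument, in the spirit of Goldreich-Ron. The expectation is immediate: expanding over the uniform choices of $v$ and $u \in \Gamma(v)$ gives
$$\EX[X] = \frac{1}{n}\sum_v \frac{1}{d_v} \sum_{u \in \Gamma(v)} 2d_v \cdot \bone_{v \prec u} = \frac{2}{n}\sum_v d^+_v = \frac{2m}{n} = \davg,$$
since every edge contributes $1$ to $d^+_v$ exactly once, at its $\prec$-smaller endpoint.

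The structural fact I would use next is $d^+_v \leq \sqrt{2m}$ for every $v$: if $d_v \leq \sqrt{2m}$ this is immediate, and if $d_v > \sqrt{2m}$ then every $u \succ v$ has $d_u \geq d_v$, so there are at most $2m/d_v < \sqrt{2m}$ such neighbors. However $X$ can still reach $2d_v$, so to apply Chernoff I would truncate. Set $T = 4\sqrt{2m/\eps}$ and $\tilde X = X \cdot \bone_{X \leq T}$. The event $X > T$ forces $d_v > T/2$ and $v \prec u$; writing $V_H = \{v : d_v > T/2\}$, we have $|V_H| \leq 4m/T$, and every edge counted by $\sum_{v \in V_H} d^+_v$ has $d_u \geq d_v > T/2$, so it lies inside $V_H$. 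Hence $\sum_{v \in V_H} d^+_v \leq |V_H|^2/2 \leq 8m^2/T^2$, giving
$$\EX[X - \tilde X] \leq \frac{2}{n}\cdot \frac{8m^2}{T^2} \leq \frac{\eps}{2}\davg$$
by the choice of $T$; equivalently $\EX[\tilde X] \geq (1-\eps/2)\davg$.

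Since $\tilde X \in [0,T]$, a standard multiplicative Chernoff bound yields
$$\Pr\bigl[\bar{\tilde X} < (1-\eps/2)\EX[\bar{\tilde X}]\bigr] \leq \exp\bigl(-(\eps/2)^2 s \EX[\bar{\tilde X}]/(2T)\bigr).$$
Plugging in $\EX[\bar{\tilde X}] \geq \davg/2 = m/n$ and $T = \Theta(\sqrt{m/\eps})$, the exponent simplifies to $\Omega(\eps^{5/2} s \sqrt{m}/n)$. For $s \geq c\log(\eps^{-1}\log n)\eps^{-2} n/\sqrt{\eps m}$ with $c$ a large enough constant, this exponent exceeds $\log(\eps^2\log^2 n)$, so the displayed probability is at most $1/(\eps^2 \log^2 n)$. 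Since $\bar X \geq \bar{\tilde X}$ pointwise and $(1-\eps/2)\EX[\bar{\tilde X}] \geq (1-\eps/2)^2 \davg \geq (1-\eps)\davg$, the bad event $\{\bar X < (1-\eps)\davg\}$ is contained in the Chernoff bad event, which completes the argument.

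The only delicate step is the choice of $T$: small enough that the Chernoff exponent $\eps^2 s\EX[\bar{\tilde X}]/T$ absorbs the failure probability with only a $\log(\eps^{-1}\log n)$ overhead in $s$, yet large enough that the truncation bias is at most $(\eps/2)\davg$. The setting $T = \Theta(\sqrt{m/\eps})$ balances these two constraints and is precisely what produces the $1/\sqrt{\eps m}$ factor (rather than $1/\sqrt{m}$) in the sample complexity.
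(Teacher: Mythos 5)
Your proof is correct and follows essentially the same route as the paper: you truncate away the contribution of high-degree vertices (your $V_H$ with degree threshold $\Theta(\sqrt{m/\eps})$ plays the role of the paper's top-$k$ set $S_k$ with $k=\sqrt{\eps m}$), bound the lost expectation by roughly $k^2$ edges internal to that set, and apply a multiplicative Chernoff bound to the truncated variable, which is bounded by $O(\sqrt{m/\eps})$. The only differences are cosmetic: you truncate by a degree threshold rather than by rank, and your $(1-\eps/2)^2 \geq 1-\eps$ bookkeeping avoids the paper's final rescaling of $\eps$.
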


\begin{proof} The expectation bound is direct. $\EX[X] = n^{-1} \sum_v (d^+_v/d_v)\cdot 2d_v = 2\sum_v d^+_v/n = \davg$.
Let $v_1, v_2, \ldots, v_s$ be the u.a.r. vertices corresponding
to $X_1, \ldots, X_s$. Set $k = \sqrt{\eps m}$, so let $S_k$ denote the highest $k$
vertices, according to $\prec$. First, some properties of $S_k$. 
For all $v \in S_k$, $d^+_v \leq k$. On the other hand, for all $v \notin S_k$,
$d_v \leq 2m/k$. (If not, the total sum of degrees in $S_k$ exceeds $2m$.)

Define random variable $Y_i$ as:
$Y_i = X_i$ if $v_i \notin S_k$ and $0$ otherwise. Denoting $\bar{Y} = \sum_i Y_i/s$,
note that $\bar{X} \geq \bar{Y}$, so it suffices to provide a lower tail for $\bar{Y}$.
Observe that $\EX[Y] = 2n^{-1}\sum_{v \notin S_k} d^+_v$
$ = 2n^{-1}(m - \sum_{v \in S_k} d^+_v) \geq 2n^{-1}(m - k^2)$ $\geq (1-\eps)\davg$.

We apply a standard multiplicative Chernoff bound on $\bar{Y} = \sum_i Y_i/s$,
noting that $Y_i \in [0,2m/k]$. So
$\Pr[\bar{Y} \geq (1-\eps)\EX[\bar{Y}]] \leq \exp(-\eps^2s\EX[\bar{Y}]/3(2m/k))$.
and going through the motions,
$$\eps^2s\EX[\bar{Y}]/3(2m/k) \geq c(1-\eps)\log(\eps^{-1}\log n)\davg n/6m = \Omega(\log(\eps^{-1}\log n))$$
By a sufficiently large choice of $c$, $\Pr[\bar{Y} \geq (1-\eps)^2\davg] \leq 1/\eps^3\log^2n$. Rescale
$\eps$ to complete the proof.
\end{proof}

To get an upper tail bound, we simply use Markov on $\bar{X}$. Of course, we do not have
a bonafide algorithm, since the value of $m$ is required to choose $s$.
But a simple geometric search for $m$ wraps up the whole proof. This is a repeat of the proof of \Thm{maintri}.

\begin{theorem} \label{thm:maindeg} There exists a $O(\eps^{-3.5}\log(\eps^{-1}\log n) n/\sqrt{m})$ algorithm
that provides an estimate in $[(1-\eps)\davg, (1+\eps)\davg]$ with probability $> 5/6$.
\end{theorem}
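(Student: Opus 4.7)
The plan is to promote the one-sided guarantee of \Thm{app} into a two-sided $(1\pm\eps)$-estimator, handling the two remaining gaps: the theorem provides only a lower tail, and its sample count $s = \Theta(\eps^{-2.5}\log(\eps^{-1}\log n)\, n/\sqrt{m})$ depends on the unknown $m$. The overall template is the one used at the end of the proof of \Thm{maintri}.

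For the upper tail I would invoke Markov's inequality on $\bar{X}$. Since $\EX[\bar{X}] = \davg$, Markov yields $\Pr[\bar{X} \leq (1+\eps)\davg] \geq \eps/(1+\eps) \geq \eps/2$. Running the estimator of \Thm{app} independently $k = O(\eps^{-1}\log\log n)$ times and returning the \emph{minimum} of the outputs, the probability that the minimum exceeds $(1+\eps)\davg$ is at most $(1-\eps/2)^k \leq 1/\poly(\log n)$. A union bound with the lower-tail failure of \Thm{app} places the minimum in $[(1-\eps)\davg,(1+\eps)\davg]$ with probability $1-1/\poly(\log n)$. Each sample---pick a u.a.r.\ vertex, pick a u.a.r.\ neighbor, and do two degree queries for the $\prec$ comparison---costs $O(1)$, so the total work is $O(k \cdot s) = O(\eps^{-3.5}\log(\eps^{-1}\log n)\, n/\sqrt{m})$ after folding $\log\log n$ into the outer log factor.

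To eliminate the dependence on $m$, I would perform a geometric search: try guesses $m' \in \{n^2, n^2/2, n^2/4,\ldots\}$ in decreasing order, running the $k$-fold estimator with $s$ calibrated for $m'$ at each step. Stop when the returned minimum falls inside $(1\pm\eps)\cdot 2m'/n$; otherwise, halve $m'$. Since the sample count $s \propto 1/\sqrt{m'}$ grows by only a factor of $\sqrt{2}$ per halving, the runtimes across levels form a geometric series dominated by the final (correct) guess, so the search contributes only a constant factor. A union bound over the $O(\log n)$ levels against per-level failure $1/\poly(\log n)$ keeps the total failure probability below $1/6$.

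The main item requiring verification is that the geometric search terminates at the correct level. When $m' \gg m$, Markov alone forces the returned minimum to be at most $(1+\eps)\davg = (1+\eps)\cdot 2m/n$, which is far below $(1-\eps)\cdot 2m'/n$, so the test fails and we correctly halve; when $m'$ reaches the right scale, both tails hold simultaneously and the estimate lands in the target band. This is exactly the pattern used at the end of the proof of \Thm{maintri}, so no new ideas are required beyond the guarantees already supplied by \Thm{app}.
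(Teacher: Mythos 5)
Your proposal matches the paper's proof essentially step for step: Markov on $\bar{X}$ for the upper tail, taking the minimum of $O(\eps^{-1}\log\log n)$ independent runs of the estimator of \Thm{app}, and a geometric search over guesses $n^2, n^2/2, \ldots$ for $m$ that halves the guess whenever the estimate is inconsistent with it. Your added remarks on the geometric series of runtimes and on why the search stops at the right scale only flesh out details the paper leaves implicit, so no substantive difference.
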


\begin{proof} We perform a geometric search for $m$, by guessing its value as $n^2, n^2/2, n^2/2^2, \ldots$.
Consider the estimate of \Thm{app} (for some guessed value of $m$). By Markov, $\Pr[\bar{X} \leq (1+\eps)\davg] > \eps/2$.
Basically, we run the procedure of \Thm{app} independently $c\eps^{-1}\log\log n$ times
and take the minimum estimate. With probability at least $1-1/\log^2n$, the estimate
is at most $(1+\eps)\davg$.

This yields an estimate for $m$ as well. If the estimate is larger than the current guess, we halve the guess for $m$.
Eventually, we reach a guess such that $s$ is large enough (in \Thm{app}). At the stage,
the minimum of $c\eps^{-1}\log\log n$ estimates is at most $(1+\eps)\davg$
and at least $(1-\eps)\davg$ with probability at least $1-1/\log n$. A union bound
over all errors completes the proof.
\end{proof}

We note that this proof is very similar to Feige's $(2+\eps)$-approximation. His proof is more involved
and uses tighter arguments to get the dependence on $\eps$ down to $\eps^{-1}$.

\bibliographystyle{alpha}
\bibliography{triangles}

\end{document}